\newcommand{\refsec}[1]{Sec.~\ref{sec:#1}}
\newcommand{\refalg}[1]{Alg.~\ref{alg:#1}}
\newtheorem{theorem}{Theorem}
\theoremstyle{definition}
\def\BibTeX{{\rm B\kern-.05em{\sc i\kern-.025em b}\kern-.08em
    T\kern-.1667em\lower.7ex\hbox{E}\kern-.125emX}}
\begin{document}

\begin{acronym}
    \acro{hw}[HW]{hardware}
    \acro{sw}[SW]{software}
    \acro{ip}[IP]{Intellectual Property}
    \acrodefplural{ip}[IPs]{Intellectual Properties}
    \acro{soc}[SoC]{System on Chip}
    \acrodefplural{soc}[SoCs]{Systems on Chip}
    \acro{3pip}[3PIP]{third-party IP}
    \acro{ht}[HT]{hardware Trojan}
    \acro{ipc}[IPC]{Interval Property Checking}
    \acro{cex}[CEX]{counterexample}
    \acro{uci}[UCI]{Unused Circuit Identification}
    \acro{bmc}[BMC]{Bounded Model Checking}
    \acro{ift}[IFT]{Information Flow Tracking}
    \acro{ml}[ML]{Machine Learning}
  \end{acronym}

\title{
A Golden-Free Formal Method for Trojan Detection in Non-Interfering Accelerators%
\thanks{This work was supported partly by Bundesministerium f\"ur Bildung und Forschung Scale4Edge, grant 
no. 16ME0122K-16ME0140+16ME0465, by Intel Corp., Scalable Assurance Program and by Siemens EDA.}
}

\author{
    \IEEEauthorblockN{%
    Anna Lena Duque Ant\'on\IEEEauthorrefmark{1},
    Johannes M\"uller\IEEEauthorrefmark{1},
    Lucas Deutschmann\IEEEauthorrefmark{1},
    Mohammad Rahmani Fadiheh\IEEEauthorrefmark{2},\\
    Dominik Stoffel\IEEEauthorrefmark{1},
    Wolfgang Kunz\IEEEauthorrefmark{1} \\%
  }
  \IEEEauthorblockA{%
        \IEEEauthorrefmark{1}University of Kaiserslautern-Landau, Kaiserslautern, Germany
        \IEEEauthorrefmark{2}Stanford University, Stanford, USA \\
        \vspace{-3ex}
  }%
    \IEEEauthorblockA{%
        \{anna.duqueanton, johannes.mueller, lucas.deutschmann, dominik.stoffel, wolfgang.kunz\}@rptu.de, fadiheh@stanford.edu
        \vspace{-3ex}
  }%
}
\maketitle
\thispagestyle{firstpage}
\pagestyle{otherpages}

\begin{abstract}
    The threat of hardware Trojans (HTs) in security-critical IPs like cryptographic
    accelerators poses severe security risks. %
    The HT detection methods available today mostly rely on golden models and
    detailed circuit specifications. 
    Often they are specific to certain HT payload types, making pre-silicon verification difficult and
    leading to security gaps. %
    We propose a novel formal verification method for HT detection in
    non-interfering accelerators at the Register Transfer Level (RTL),
    employing standard formal property checking. %
    Our method guarantees the exhaustive detection of any sequential HT independently of its payload behavior, including physical side channels.
    It does not require a golden model or a functional
    specification of the design. 
    The experimental results demonstrate efficient and effective
    detection of all sequential HTs in accelerators available on
    Trust-Hub, including those with complex triggers and payloads. %
\end{abstract}

\begin{IEEEkeywords}
  Hardware Security, Formal Verification, Hardware Trojans
\end{IEEEkeywords}

\section{Introduction}
\label{sec:introduction}
Many of today's \acp{soc} outsource complex computation tasks to
  specialized hardware accelerators. %
  Such blocks of \ac{ip} can implement certain functions with higher
  performance and efficiency than \ac{sw} solutions using a CPU. %
The resulting need for specialized and cost-effective accelerators is catered to by a global supply chain. %
Accelerator \acp{ip} can be acquired and integrated
as \acp{3pip} or generated using third-party EDA tools. %
  However, the flexibility and the opportunities of a diverse and global supply chain also introduce new security risks. %
  Among these, \acp{ht} are a prominent class of threats~\cite{2014-BhuniaHsiao.etal}. %
  The threat is exacerbated by the trend to outsource even security-critical computations to \acp{3pip}, including implementations for %
  encryption, which is the foundation of the overall system security. %
Encryption accelerators, therefore, need to undergo thorough verification for any malicious behavior. %
For reasons of cost, verification time and coverage, this process is increasingly moved to the pre-silicon design phase~\cite{2014-NishaMerli.etal}. %

So far, however, classical verification methods often perform poorly in detecting \acp{ht}~\cite{2016-XiaoForte.etal}. %
Intelligent adversaries construct \emph{stealthy \acp{ht}} that are able to evade common detection techniques. %
A stealthy \ac{ht} executes its malicious behavior, commonly referred
to as \textit{payload}, only after a \textit{trigger} condition is
met. %
    We distinguish %
  between combinational and sequential \acp{ht}. %
While there exist effective detection methods for the former type~\cite{2018-ZhouGuin.etal}, detecting the latter type is still a hard problem~\cite{2021-JainZhou.etal}. %
  For sequential \acp{ht} the %
  trigger condition is %
    created %
  such that it requires a potentially long sequence of input events which has a very low
  probability of occurring during testing. %
While contemporary detection methods like functional validation can be quite effective against HTs with short trigger sequences,
more complex triggers, especially those based on very long input sequences, can easily neutralize such methods~\cite{2020-XueGu.etal}. %
Furthermore, many detection methods rely on golden models, i.e., an
\ac{ht}-free design, or detailed
specifications~\cite{2014-RathmairSchupfer.etal}, which may not be
available. %
And finally, most methods are limited with respect to what payload
types they can detect. %
Physical side channels, in particular, are largely ignored by previous
formal %
pre-silicon %
HT detection methods. %

We intend to overcome these challenges by providing a formal
verification method for \ac{ht} detection in %
accelerator %
\acp{ip}. %
We target \emph{non-interfering} accelerators (cf.~Sec.~\ref{sec:background}). %
In fact, many loosely-coupled accelerators integrated in heterogeneous SoCs
belong to this class~\cite{2020-SinghLonsing.etal}. %
Our method operates on RTL models and is based on standard formal
property checking, which makes it easy to integrate it into existing
verification flows. %
The proposed approach can produce formal guarantees for the absence of
\acp{ht} in a design. %

In summary, this paper makes the following contributions: %
\begin{itemize}
\item We propose, %
for the first time, a formal methodology that allows us to \textit{exhaustively}
verify the absence of %
any sequential \ac{ht} with an arbitrary long and complex trigger sequence in a non-interfering accelerator IP. (\refsec{method}) %
\item Our method does not rely on a golden model or a functional specification of the design, by merit of the proposed 2-safety computational model. (\refsec{intuition})%
\item We guarantee the detection of any sequential \ac{ht}  independently of its payload. %
We exploit that sequential HTs have some RTL representation of their payload, even in the case of physical side channels. (\refsec{payloads})%
\item We effectively and soundly decompose the verification target into single-cycle properties allowing us to introduce a scalable iterative verification flow (\refsec{flow}). %
We demonstrate the efficiency and effectiveness of our method by
application to all accelerator \acp{ip} available on
Trust-Hub~\cite{2013-SalmaniTehranipoor.etal, 2017-ShakyaHe.etal} (\refsec{experiments}). %
\end{itemize}

\section{Related Work}
\label{sec:related-work}
Several works leverage verification tests for hardware trojan
detection. %
In~\cite{2010-HicksFinnicum.etal}, malicious circuit detection is based on \ac{uci}. %
\ac{uci} identifies circuit %
parts that do not affect the outputs during verification tests and
therefore may include malicious logic. %
However, as demonstrated by~\cite{2011-SturtonHicks.etal}, the
adversary can design \acp{ht} in such a way that there is a
verification test affecting the \ac{ht}'s logic without fulfilling the
trigger condition. %
VeriTrust~\cite{2013-ZhangYuan.etal} detects HTs in a design by
analyzing system states that are not covered by verification tests to identify the
trigger. %
In~\cite{2013-WaksmanSuozzo.etal}, circuit wires are analyzed regarding their probability to influence outputs. %
The lower their impact the more likely they belong to an \ac{ht}. %
However,~\cite{2013-ZhangYuan.etal} and~\cite{2013-WaksmanSuozzo.etal} do not detect \acp{ht} with more complex sequential trigger logic, as discussed in~\cite{2014-ZhangYuan.etal}. %
In contrast, this paper proposes an approach that guarantees the detection of \acp{ht} with arbitrary long and complex trigger sequences. %
    
Other works apply formal verification for \ac{ht} detection. %
The work of~\cite{2018-HuArdeshiricham.etal}, for example, leverages \ac{ift} to derive formal models that are used for checking security properties
like confidentiality or integrity of sensitive data. %
While the approaches are sound w.r.t. to their target properties, they
depend on modeling specific payloads and are, thus, not exhaustive. %
In particular, they cannot detect HTs based on non-functional design
specifications like (physical) side channels. %
The same limitations exist in approaches that use \ac{bmc} to detect
\acp{ht} that modify data in
registers~\cite{2015-RajendranVedula.etal} or leak secret data to the
IP outputs. %
Furthermore, due to the limitations of the bounded proof, the approaches are unable to detect trojans with very long trigger sequences. %
For the same reason, functional verification approaches based on \ac{bmc}~\cite{2020-SinghLonsing.etal} are not effective for trojan detection.
A SAT-based method is proposed in~\cite{2017-FernSan.etal} to detect
HTs that do not violate design specifications. %
The method detects \acp{ht} that modify unspecified design functionality.
Our approach is not limited to a specific \ac{ht} implementation, but detects arbitrary, possibly unknown, payload implementations.

The approaches in~\cite{2021-ItoUeno.etal, 2017-FarahmandiHuang.etal} use equivalence checking to compare the IP under verification against a golden, HT-free design, which, however, is often unavailable. %
Therefore, we propose a golden-free detection method.
  
In recent years, pre-silicon \ac{ht} detection based on \ac{ml} has
become popular. %
These approaches employ \ac{ml} to generate test  patterns that are more
likely to trigger \acp{ht}~\cite{2021-PanMishra} or to classify
structural design features as HT-free~\cite{2020-YuGu.etal,
2022-HeppBaehr.etal}. %
While these approaches are effective even for large designs, they are
inherently not exhaustive and thus cannot provide security
guarantees. %

\section{Trojans in Non-Interfering Accelerators}
\label{sec:background}

Our method focuses on the detection of HTs in so-called \emph{non-interfering} accelerator IPs. %
This notion, introduced by~\cite{2020-SinghLonsing.etal}, refers to the typical characteristic of accelerators that the computed result for a given input is independent of any inputs received earlier or later.  
It is important to note that this definition is not a restriction to combinational circuits but includes sequential ones. %
Fig.~\ref{fig:trojan-crypto} illustrates such a design. 
It shows a cryptographic accelerator infested with an \ac{ht}. %
The IP implements encryption of a plaintext with a key. %

\begin{figure}
    \centering
    \includegraphics[trim={72mm 62mm 72mm 62mm}, clip, width=\linewidth]{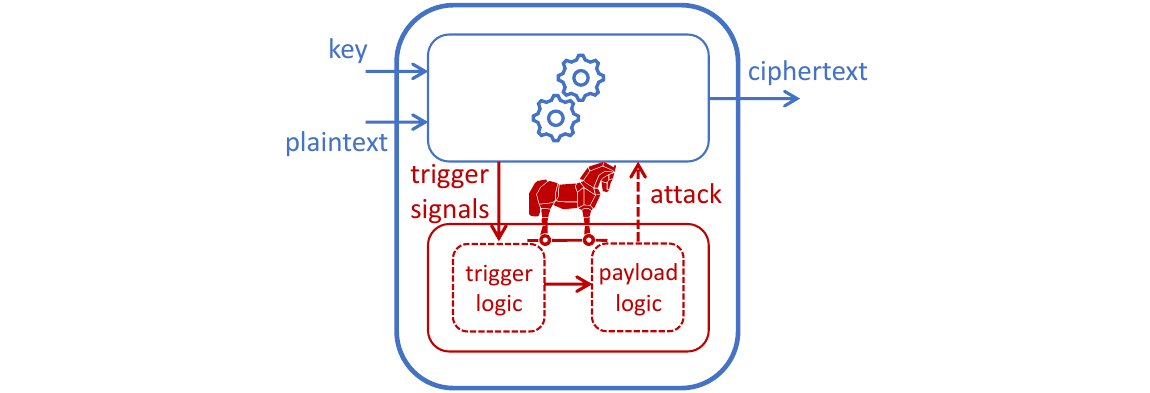} 
    \caption{HT in a cryptographic IP. %
The HT consists of a trigger and payload.} %
\vspace{-5mm}
    \label{fig:trojan-crypto}
\end{figure}
  
\acp{ht} can be classified by their trigger and their payload. %
The purpose of the trigger is to provide a reliable but hard-to-detect means for an
attacker to activate the trojan. %
To this end, an \ac{ht} trigger may rely on multiple logic signals of the
circuit and can consist of arbitrary complex (sequential) logic. %
When an \ac{ht} is activated, it executes its payload, which manifests
itself as malicious behavior in various ways, such as leaking a key
via output pins or using a power side channel. %

It is common to distinguish between \emph{combinational} and \emph{sequential} \acp{ht}.
There exist works that effectively detect combinational \acp{ht}, i.e., \acp{ht} that are triggered by a combinational circuit~\cite{2021-JainZhou.etal}.
However, the detection of sequential \acp{ht}, i.e., \acp{ht} that are triggered by a sequential circuit is still an open challenge~\cite{2011-WangNarasimhan.etal}.
Sequential \acp{ht} are activated only after a possibly long
sequence of regular, specification-conforming executions. %
The trigger does not necessarily depend on the input values. %
It may also simply be a counter activated by the reset signal. %

\section{Method}
\label{sec:method}

In the following we present a formal property checking method that allows us to detect
\textit{all} sequential \acp{ht} in accelerator
\acp{ip}. %
The properties are design-agnostic and do not require a golden model of the design. %
We describe the intuition behind our analysis and the key challenges
in Sec.~\ref{sec:intuition}. %
In Sec.~\ref{sec:trigger} and~\ref{sec:payloads}, we then describe how
the proposed methodology meets these challenges. %
In addition, we discuss the exhaustiveness of our method in
Sec.~\ref{sec:exhaustiveness}. %

\subsection{Intuition}
\label{sec:intuition}
As discussed in Sec.~\ref{sec:background}, an \ac{ht} executes its payload after being activated by its
trigger. %
The payload can be any malicious behavior that modifies the
functionality of the \ac{ip} or adds unwanted functionality to the
\ac{ip}. %
Instead of comparing the \ac{ip}'s behavior with a golden model, which is usually unavailable, our
method compares \textit{two identical instances} of the \ac{ip}: %
We analyze the behavior of the two instances under the same inputs,
but allow the solver to assume %
different, arbitrary \textit{input histories}. 
These different input histories are captured by the different \textit{symbolic starting states} in the two instances of our computational model, as elaborated below. %
In case the design is infested with a sequential~\ac{ht}, we can compare one
instance with a triggered HT about to execute its payload to one with
an untriggered \ac{ht}. %
The method then verifies whether the two instances behave the same for
all possible input sequences from that time point on. %
If the proof fails, it returns a counterexample pointing at the HT's
payload. %
If the verification does not detect any divergence in the state or
output of the two IP instances, then there is no \ac{ht} with a
malicious payload in the design. %
This is illustrated with the 2-safety \textit{miter} setup
shown in Fig.~\ref{fig:trojan-detection}. %

\begin{figure}
  \centering
  \includegraphics[trim={70mm 60mm 70mm 60mm}, clip, width=\linewidth]{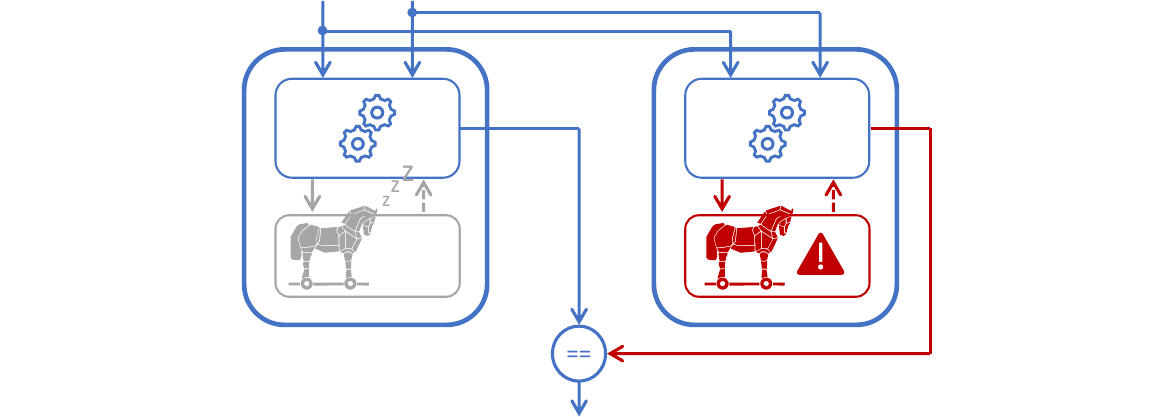} 
  \caption{
  Two instances of the same IP (``miter''), both containing an HT. %
  The same inputs are fed to both instances. %
  The HT in instance\textsubscript{1} is triggered while the HT in instance\textsubscript{2} is dormant. 
  This results in an observable difference in the behavior of the two IPs.%
  }
  \label{fig:trojan-detection}
  \vspace{-5mm}
\end{figure}

There are two key challenges with this approach that our method addresses: %
\begin{enumerate}
\item modeling arbitrary (unknown) trigger sequences, including ones
  of arbitrary length, and %
  allowing the \ac{ht} to be triggered in
  one of the instances while not in the other; %
  
\item targeting an arbitrary payload. %
  
\end{enumerate}
In Secs.~\ref{sec:trigger} and~\ref{sec:payloads} respectively, we elaborate
on how these two challenges are overcome. %

\subsection{Modelling trigger sequences}
\label{sec:trigger}

The key idea for dealing with \ac{ht} triggers exploits the nature of our employed property checking technique. %
Our detection method is based on \textit{\ac{ipc}}~\cite{2008-NguyenThalmaier.etal}, which uses bounded properties
but achieves unbounded proof validity. %
\ac{ipc} facilitates proving (interval) properties of the form \textit{antecedent} $\Rightarrow$ \textit{consequent} with a symbolic starting state. 
This means that a solver verifying an interval property is free to choose any state of the design as
starting state for the proof as long as the antecedent is fulfilled. %
This symbolic starting state can thus \textit{implicitly} model any
history of previous states, including \emph{any possible trigger
  sequence}. %
Essentially, our method leverages \ac{ipc} to ``fast-forward'' to the time point at which a potential \ac{ht} is activated. %

  The verification method presented in the following sections exploits an important characteristic of non-interfering accelerators. %
    Such designs are often \emph{data-driven}, i.e., they typically %
  determine the internal states relevant for their computations only from the inputs and, essentially, independently %
  of the
  accelerator's internal state at the start of the computation. %
  This allows for the effective use of symbolic starting states in the
  properties of our verification flow without the risk of many false
  alarms. %

\subsection{Detecting payloads}
\label{sec:payloads}

Using a symbolic starting state, we can model an active \ac{ht} in one
instance together with a dormant \ac{ht} in the other instance. %
This addresses the first challenge from \refsec{intuition}. %
However, the second challenge -- the variety of \ac{ht} payloads -- still remains. %
  We aim to detect not just specific payloads, such as leaking a secret key, but %
any possible sequential \ac{ht} visible at the RTL, without requiring a full specification. %

  Consider a basic IPC proof that enumerates all malicious HT
  behaviors without reference to a golden model. %
  Some functionality
  might be overlooked, and HT payloads with additional behaviors or
  physical side channel manifestations could be missed. %
We overcome these problems by leveraging an important observation: %
Regardless of its nature, the effect of a sequential HT's payload has
to manifest itself in at least one state signal or output of the
design or it does not have any security implications. %
Instead of formulating all possible behaviors in the consequent, we
use this observation together with our employed miter setup. %

We verify the equivalence of the two instances using a structural
decomposition of the design. %
We compute all state and output signals that appear in the transitive
fanout cone of the inputs and partition them according to the smallest
number of clock cycles it takes for the inputs to affect their
value.
  We denote the set of \textit{state} and
  \textit{output} signals affected after $k$ clock
  cycles by \textit{fanouts\_CC\textsubscript{k}}.
We then prove for each $k$ the equivalence of \textit{fanouts\_CC\textsubscript{k}} between
the two instances for the corresponding time window. %
This results in the \textit{trojan\_property} in Fig.~\ref{fig:refined-trojan-property}. %
The property checker is free to choose arbitrary starting states, as
long as, at time point~$t$, the same inputs are provided to the two IP
instances. %
For each consecutive clock cycle, the equality of the corresponding \textit{fanouts\_CC\textsubscript{k}} is checked. %
Proving this interval property makes any malicious behavior visible that manifests itself in either state or output signals in the fanout path of the inputs. %
We provide a detailed pseudo-code description for our formal
verification flow in~\refsec{flow}. %

\begin{figure}%
      \begin{minipage}{0.9\linewidth}
       \fontsize{9}{11}\selectfont
         \begin{tabbing}
          XX\=XXXXX\=XXX\=XXX\=\kill
          \textbf{trojan\_property:}\\
          \textcolor{blue}{assume:} \\
           \> \textcolor{blue}{at} $t$: %
          \>\>$\text{\textit{inputs\_instance\textsubscript{1} = inputs\_instance\textsubscript{2}}}$ \\
          \textcolor{blue}{prove:} \\
          \> \textcolor{blue}{at} $t+1$: \>\>$\text{\textit{fanouts\_CC\textsubscript{1}\_instance\textsubscript{1} = fanouts\_CC\textsubscript{1}\_instance\textsubscript{2}}}$ \\
          \> \textcolor{blue}{at} $t+2$: \>\>$\text{\textit{fanouts\_CC\textsubscript{2}\_instance\textsubscript{1} = fanouts\_CC\textsubscript{2}\_instance\textsubscript{2}}}$ \\  
          \> \ldots \\
          \> \textcolor{blue}{at} $t+n$: \>\>$\text{\textit{fanouts\_CC\textsubscript{n}\_instance\textsubscript{1} = fanouts\_CC\textsubscript{n}\_instance\textsubscript{2}}}$
        \end{tabbing}
      \end{minipage}
  \caption{Interval property for hardware trojan detection}
  \label{fig:refined-trojan-property}
  \vspace{-5mm}
  \end{figure}  

\subsection{Exhaustiveness}
\label{sec:exhaustiveness}

The question arises whether we can exhaustively detect \emph{any} malicious
behavior introduced by a sequential HT with this property. %
For this, we need to distinguish three cases: %
\begin{enumerate}
\item The payload affects at least one state or output signal that lies on a fanout path of the inputs. %
\item The payload affects only state or output signals that are not
  part of a fanout path. %
  \item The payload has no effect on any state or output signal. %
\end{enumerate}
\textit{Case 1:} %
The state or output signal, say~$s$, is covered by the consequent
(\textit{prove} part) of our property. %
  Hence, a symbolic initial state exists where
  instance\textsubscript{1} meets the HT trigger condition while
  instance\textsubscript{2} does not. %
  As a result, the triggered HT affects~$s$ in
  instance\textsubscript{1} but does not in
  instance\textsubscript{2}. %
  The property checker computes a counterexample (CEX) that
  demonstrates this difference, unveiling the payload's malicious
  behavior.%

\textit{Case 2:} %
The \ac{ht} is activated independently of the inputs%
, e.g., a timer started by the system reset and its payload does not
affect the fanout cone of the IP's inputs (cf.~\refsec{experiments}
and design example AES-T1900). %
This case is not detectable by the property. %
However, it can be covered by a simple structural analysis: %
We need to check whether all state and output signals of our \ac{ip} appear in
the \textit{prove} part of our property. %
Those that do not may belong to a possible \ac{ht}. %

\textit{Case 3:} If the \ac{ht}'s payload neither manifests itself in any state signal nor in any output signal then the \ac{ht} does not implement security-critical behavior visible at the RTL. %

Since cases 1 to 3 cover all possible cases of an \ac{ht}'s payload in the RTL design we conclude that we exhaustively detect all sequential \acp{ht} with
this method. %

\section{Formal Detection Flow}
\label{sec:flow}
For practical purposes, we seek to develop an \ac{ht} analysis that is scalable and easy to use for the verification engineer. %
We decompose the property in Fig.~\ref{fig:refined-trojan-property} into a set of interval properties each covering exactly one clock cycle. %
These become elements of an iterative verification flow where
individual proofs have short runtimes and counterexamples point to
potentially malicious behavior with high accuracy. %

We employ two types of properties. %
The first type is the \textit{init\_property} shown in
Fig.~\ref{fig:init-property}. %
It verifies that there is no malicious interference with the propagation
of the input signals to the first %
fanout signals, \textit{fanouts\_CC\textsubscript{1}}, %
  reached in the IP. %
The property assumes equal inputs, under which the equality of the
\textit{fanouts\_CC\textsubscript{1}} %
  is
proven. %
The \textit{init\_property} is a cutout of the
\textit{trojan\_property} of Fig.~\ref{fig:trojan-detection} until
time point~$t+1$. %
The second type of property, the %
fanout properties
  (%
\textit{fanout\_property\_k}%
)%
, %
  shown %
in~Fig.~\ref{fig:fanout-property}%
  , %
covers the equality checks for
all subsequent time points. %
As we %
  show %
in~Sec.~\ref{sec:soundness}, the set of decomposed
properties is equivalent to the \textit{trojan\_property}. %

Alg.~\ref{alg:HT-detection-flow} %
  implements %
the iterative \ac{ht} detection flow based on these two property types
and the method described in Sec.~\ref{sec:method}. %
In the first step, the %
\textit{fanouts\_CC\textsubscript{1}}%
, i.e., all state and output %
signals reachable within one clock cycle from the input signals, are
computed: %
Get\_Fanout() implements a simple structural analysis that traces
syntactic dependencies of %
  state-holding %
elements in the RTL design. %
With this information, the \textit{init\_property} is constructed %
  (line~3)%
. %
The property is then checked with IPC. %
In case the property fails, a counterexample, \textit{CEX}, is
returned %
    that %
  points to a possible hardware trojan and %
    that %
  must be inspected by the verification engineer. %
  If the property holds, the procedure continues. %
  In the next step the %
  \textit{fanouts\_CC\textsubscript{1}} become the starting points of the structural analysis. %
  All state %
   and output %
  signals reachable within one clock cycle from the %
  \textit{fanouts\_CC\textsubscript{1}} %
  are determined and the corresponding \textit{fanout\_property\_1} is
  constructed for the next iteration. %
  Any failing interval property will produce a \ac{cex} that shows the
  exact state signals where a potential trojan might be implemented. %
  This process is repeated, verifying %
  a \textit{fanout\_property\_k}
  in each iteration, until no new %
  state or output signals %
  are added. %
In case all properties hold, we conclude the procedure by checking whether the property set covers all state and output %
  signals of the IP under verification (cf.~case 2 of Sec.~\ref{sec:exhaustiveness}). %
If there are any state or output %
  signals left, the set of uncovered %
signals~(\textit{UCS}) is returned. %
It is important to note that the number of loop iterations (line 8-16) is limited by the structural, not the sequential, depth of the design.

  \begin{algorithm}
    \caption{\small Formal HT Detection Flow}\label{alg:HT-detection-flow}
     \small
    \begin{algorithmic}[1]
        \Procedure{HT-Detection}{\textit{IP, inputs}}
        \State \textit{fanouts\_CC\textsubscript{1}} $\gets$ Get\_Fanout(\textit{IP, inputs})
        \State \textit{init\_property} $\gets$ Create\_Init\_Property(\textit{inputs, fanouts\_CC\textsubscript{1}})
        \State \textit{CEX} $\gets$ IPC(\textit{init\_property})
        \If {$\text{\textit{CEX}} \neq \emptyset$}
        \Return~\textit{CEX} 
        \EndIf
        \State \textit{fanouts\_all} $\gets$ $\emptyset$
        \State $k$ $\gets$ $1$
        \Repeat
            \State \textit{fanouts\_all} $\gets$ \textit{fanouts\_all} $\cup$ \textit{fanouts\_CC\textsubscript{k}}
            \State \textit{fanouts\_CC\textsubscript{k+1}} $\gets$ Get\_Fanout(\textit{IP, fanouts\_CC\textsubscript{k}})
            \State \textit{fanout\_property\_k} $\gets$  Create\_Property(\textit{fanouts\_CC\textsubscript{k}}, 
            \begin{tabbing}
              XXXXXX\=XXXXX\=XXXXX\=XXXXX\=XXXXX\=\kill
            \>\>\>\>\> \textit{fanouts\_CC\textsubscript{k+1}})
            \end{tabbing}
            \State \textit{CEX} $\gets$ IPC(\textit{fanout\_property\_k})
            \If {$\text{\textit{CEX}} \neq \emptyset$}
            \Return~\textit{CEX}
            \EndIf
            \State \textit{fanouts\_CC\textsubscript{k}} $\gets$ \textit{fanouts\_CC\textsubscript{k+1}}
            \State $k$ $\gets$ $k+1$
        \Until{\textit{fanouts\_all} $\cup$ \textit{fanouts\_CC\textsubscript{k}} == \textit{fanouts\_all}}
        \State \textit{UCS} $\gets$ Check\_Signal\_Coverage(\textit{IP, fanouts\_all})
        \If {$\text{\textit{UCS}} \neq \emptyset$}
        \Return~$\text{\textit{UCS}}$  
        \EndIf
      \State \Return "SECURE"      
      \EndProcedure
    \end{algorithmic}
  \end{algorithm}
\begin{figure}[H]
\vspace{-5mm}
    \begin{minipage}{0.9\linewidth}
     \fontsize{9}{11}\selectfont
       \begin{tabbing}
        XX\=XXXXXX\=XXX\=XXX\=\kill
        \textbf{init\_property:}\\
        \textcolor{blue}{assume:} \\
        \> \textcolor{blue}{at} $t$:
        \>$\text{\textit{inputs\_instance\textsubscript{1} = inputs\_instance\textsubscript{2}}}$; \\
        \textcolor{blue}{prove:} \\
        \> \textcolor{blue}{at} $t+1$: \>$\text{\textit{fanouts\_CC\textsubscript{1}\_instance\textsubscript{1} = fanouts\_CC\textsubscript{1}\_instance\textsubscript{2}}}$; 
      \end{tabbing}
    \end{minipage}
\caption{Interval property for Init Check}
\label{fig:init-property}
\vspace{-3mm}
\end{figure} 
\begin{figure}[H]
    \vspace{-3mm}
        \begin{minipage}{0.9\linewidth}
         \fontsize{9}{11}\selectfont
           \begin{tabbing}
            XX\=XXXXXX\=XXX\=XXX\=\kill
            \textbf{fanout\_property\_k:}\\
            \textcolor{blue}{assume:} \\
            \> \textcolor{blue}{at} $t$:
            \>$\text{\textit{fanouts\_CC\textsubscript{k}\_instance\textsubscript{1} = fanouts\_CC\textsubscript{k}\_instance\textsubscript{2}}}$; \\
            \textcolor{blue}{prove:} \\
            \> \textcolor{blue}{at} $t+1$: \>$\text{\textit{fanouts\_CC\textsubscript{k+1}\_instance\textsubscript{1}
            = fanouts\_CC\textsubscript{k+1}\_instance\textsubscript{2}}}$; 
          \end{tabbing}
        \end{minipage}
    \caption{Interval property for Fanout Check}
    \label{fig:fanout-property}
    \vspace{-2mm} 
    \end{figure} 

\subsection{Soundness of Property Decomposition}
\label{sec:soundness}

In~Sec.~\ref{sec:exhaustiveness} we have discussed the exhaustiveness of the \textit{trojan\_property}. %
Verifying this property for a design guarantees detection of any sequential \ac{ht} that it might be infected with. %
In the following, we prove that the same guarantee is given by verifying the \textit{init\_property} and all computed fanout properties (\textit{fanout\_property\_k}). %

\begin{theorem}
  \label{theo:soundness}
  At least one of the fanout properties (\textit{fanout\_property\_k}) or the \textit{init\_property} fails (1)~iff the \textit{trojan\_property} fails (2). %
  \qed
\end{theorem}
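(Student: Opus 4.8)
The plan is to prove the contrapositive: the \textit{trojan\_property} holds on the miter \emph{iff} the \textit{init\_property} and every generated \textit{fanout\_property\_k} hold. I would treat the two implications separately, as they are of rather different difficulty.

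\emph{Composition direction} (all decomposed properties hold $\Rightarrow$ \textit{trojan\_property} holds). The essential tool is that an IPC property proven from a symbolic start state is invariant under time shift: once proven, it holds for the proof window anchored at \emph{every} time point of \emph{every} miter run. Fix a run that satisfies the antecedent of the \textit{trojan\_property}, i.e.\ one with equal inputs at the reference time~$t$ (recall the miter drives both instances with a common input stream). Instantiating the \textit{init\_property} at~$t$ yields equality of \textit{fanouts\_CC\textsubscript{1}} at $t+1$. I would then induct on~$k$: given equality of \textit{fanouts\_CC\textsubscript{k}} at $t+k$, instantiate \textit{fanout\_property\_k} at the anchor $t+k$ to get equality of \textit{fanouts\_CC\textsubscript{k+1}} at $t+k+1$. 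Chaining for $k = 1, \dots, n-1$ discharges every conjunct of the consequent of the \textit{trojan\_property}; the chain reaches far enough because $n$ equals the number of fanout layers Alg.~\ref{alg:HT-detection-flow} generates before no new state or output signal is added. This half is purely logical --- it uses only the shapes of the properties and time-shift invariance.

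\emph{Refinement direction} (\textit{trojan\_property} holds $\Rightarrow$ each decomposed property holds). The \textit{init\_property} is immediate: it is literally the $t \to t+1$ fragment of the \textit{trojan\_property} --- same antecedent, a single conjunct of the same consequent --- so it inherits validity. For \textit{fanout\_property\_k} I would argue by contradiction. A counterexample is an unconstrained state pair whose \textit{fanouts\_CC\textsubscript{k}} values agree but whose one-cycle successor (under the shared input) has diverging \textit{fanouts\_CC\textsubscript{k+1}}. The goal is to relocate this witness into the longer window of the \textit{trojan\_property}, i.e.\ to find a symbolic start state whose $k$-step equal-input successor is exactly that offending pair. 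Here I would use the structural characterisation behind \textit{Get\_Fanout} together with the data-driven nature of the accelerator (Sec.~\ref{sec:trigger}): $k$ cycles into an equal-input run the layers \textit{fanouts\_CC\textsubscript{1}}, \dots, \textit{fanouts\_CC\textsubscript{k}} are functions of the inputs alone, so a suitable input prefix forces them to the required (equal) values, while the signals outside the transitive fanout cone of the inputs --- which are precisely what both the \textit{trojan\_property} and \textit{fanout\_property\_k} leave free in their symbolic start states, and through which the payload effect propagates --- are set directly in the start state. Instantiating the \textit{trojan\_property} at that start state then contradicts its assumed validity.

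The step I expect to be the main obstacle is exactly this relocation: an IPC counterexample is a one-cycle window from an arbitrary, possibly unreachable, state, and to splice it into the \textit{trojan\_property}'s window one must show such a state is the equal-input $k$-step successor of \emph{some} state pair. Making this rigorous calls for an auxiliary lemma that the family \textit{fanouts\_CC\textsubscript{k}} is a genuine topological stratification of the input fanout cone --- the combinational support of a layer-$(k+1)$ signal meets only earlier layers, the inputs, and out-of-cone signals --- and that the out-of-cone part of the state can be chosen independently of the layers. This is precisely where the ``data-driven'' hypothesis on non-interfering accelerators is needed and where I expect the bookkeeping to be most delicate; the composition direction, by contrast, should go through with no circuit-specific reasoning at all.
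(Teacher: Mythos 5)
Your composition direction is sound and is essentially the contrapositive of the paper's proof of (2)~$\Rightarrow$~(1): you chain the \textit{init\_property} and the \textit{fanout\_property\_k} instances anchored at $t, t+1,\dots$ to discharge the commitments of the \textit{trojan\_property} cycle by cycle (using the miter's tied inputs), whereas the paper takes the earliest failing commitment of the \textit{trojan\_property} and reads the window $[t+k,\,t+k+1]$ of the failing run directly as a counterexample to \textit{fanout\_property\_k}, which is admissible because that property's starting state is fully symbolic. Same content, no problem there.

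The gap is in your refinement direction. Your plan is to relocate a one-cycle counterexample of \textit{fanout\_property\_k} into the long window of the \textit{trojan\_property} by exhibiting a start-state pair whose $k$-step equal-input successor is exactly the offending pair. As you suspect, that is the hard step, and the construction you sketch does not close it: the signals outside the input fanout cone cannot simply be ``set directly in the start state,'' because what you need is their value at time $t+k$, i.e.\ the $k$-step evolution of whatever you put in the start state --- nothing guarantees the counterexample's valuation of these signals is realizable $k$ cycles after \emph{any} state (consider a register overwritten with a constant every cycle). Likewise, forcing the layers \textit{fanouts\_CC\textsubscript{1}},\dots,\textit{fanouts\_CC\textsubscript{k}} to prescribed values via an input prefix needs a surjectivity property that neither the theorem statement nor the informal ``data-driven'' remark of Sec.~\ref{sec:trigger} supplies; that remark motivates why false alarms are rare (cf.\ Sec.~\ref{sec:counterexamples}) and is not a hypothesis the paper's proof ever invokes. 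The paper avoids relocation entirely and argues this direction by contradiction at the level of which signals the two formulations constrain: if \textit{fanout\_property\_k} fails on $z$ although \textit{fanouts\_CC\textsubscript{k}} and the tied inputs are equal, then $z$ depends on some state signal $x \notin$ \textit{fanouts\_CC\textsubscript{k}}; for the \textit{trojan\_property} to nonetheless prove $z$ equal at $t+k+1$ it would have to prove $x$ equal at $t+k$, which would place $x$ in \textit{fanouts\_CC\textsubscript{k}} (the sets are identical in both formulations) --- a contradiction. Your unease does point at a genuine subtlety about unreachable symbolic states, but as written your route rests on a realizability lemma that is false in general; either add the missing hypothesis and prove that lemma, or switch to the paper's structural dependency argument for this half.
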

\begin{proof}
  A key observation to keep in mind for the proof is that each set of
  fanout signals \textit{fanouts\_CC\textsubscript{k}} considered at time point $k$ %
\emph{is identical} in both property formulations: %
the aggregate \textit{trojan\_property} of Fig.~\ref{fig:refined-trojan-property} and
the decomposition into \textit{init\_property}
(Fig.~\ref{fig:init-property}) and %
the fanout properties %
(Fig.~\ref{fig:fanout-property}). %
  
We decompose the theorem into two implications and prove them
individually. %

\textit{(2)} $\Rightarrow$ \textit{(1)}: %
Assume the \textit{trojan\_property} fails because there is a state or output %
signal~$z$ with different values in the two instances at clock cycle
$t+k+1$ with $0 < k < n$. %
We further assume, w.l.o.g., that the proof commitments of all
preceding clock cycles are proven to hold, i.e., in particular,
  \textit{fanouts\_CC\textsubscript{k}} %
are equal between the two instances for the considered $k$. %
Hence, there must exist a state signal~$x$ in the fanin of~$z$ %
with $x \not\in$~\textit{fanouts\_CC\textsubscript{k}} which holds different values in the two instances. %
  (Remember that~$x$ cannot be a primary input because all fanout
  signals of primary inputs are covered by the \textit{init\_property} and
  commitment~$t+1$ of the \textit{trojan\_property}.) %
  Now consider the corresponding \textit{fanout\_property\_k}. %
  Since it considers the same set of signals \textit{fanouts\_CC\textsubscript{k}} %
  in its assumption
  as the aggregate property, $x$ is missing in this set as well,
  causing the property to fail. %

\textit{(1)} $\Rightarrow$ \textit{(2)}: %
We prove this implication by contradiction. %
Assume, w.l.o.g., \textit{fanout\_property\_k} fails because of inequality of
state signal~$z$, but the \textit{trojan\_property} holds. %
  This means that~$z$ must depend on at least one state signal~$x$ where $x \not\in$~\textit{fanouts\_CC\textsubscript{k}}. %
Furthermore, since the \textit{trojan\_property} holds, $z$ must be proven
  equal at clock cycle~$t+k+1$. %
  But this requires that the \textit{trojan\_property} must also prove the
  equality of~$x$ at the previous clock cycle~$t+k$, because the equality of~$z$ depends on the equality of~$x$. %
  However, if this is true, then~$x$ must be an element of \textit{fanouts\_CC\textsubscript{k}}. %
This contradicts the assumption from above and proves the claim. %
\end{proof}

\subsection{Analyzing Counterexamples}
\label{sec:counterexamples}

Although this occurs rarely, as explained in~Sec.~\ref{sec:trigger}, for some IPs the property checker may produce false alarms, i.e., the \textit{init\_property} or  a \textit{fanout\_property\_k} fails for some signal~$z$ although there is no HT in the design. %
For understanding  such cases, assume \textit{fanout\_property\_k} fails for~$z$. %
This means that~$z$ is affected by some other signal~$x$ in the fanin of~$z$, but~$x \not\in$~\textit{fanouts\_CC\textsubscript{k}}. %
In other words, $x$ is not proven to be equal between the two instances by the predecessor \textit{fanout\_property\_k-1}. %
This may happen in two scenarios: %
  \\
  (1)~We do not necessarily prove the fanout properties in topological order. %
  Therefore, \textit{fanout\_property\_k} may fail even though~$x$ is proven to be equal in another fanout property or the \textit{init\_property}. %
This scenario can be solved by changing the proof order of the fanout properties and adding equality for~$x$ to the assumptions of
  \textit{fanout\_property\_k}. %
   We omitted this procedure in \refalg{HT-detection-flow} to keep the presentation simple. %
  \newline (2) $x$ depends on values of previous computations but is
  not part of an \ac{ht}. %
  In this scenario the verification engineer receives a CEX pinpointing the exact
  behavior that demonstrates the dependency between~$x$ and~$z$. %
  This greatly helps in disqualifying the behavior as an \ac{ht}. %
  Similarly to the first scenario, equality for~$x$ can then be assumed in \textit{fanout\_property\_k}. %
  
  Fortunately, the characteristics of non-interfering accelerators allow for effective use of a symbolic initial state in our computational model.
  As we demonstrate in our experimental results, we encountered false counterexamples only in few cases which were easy to diagnose. %

  \section{Experiments}
\label{sec:experiments}

We applied our method to the accelerator IPs available on Trust-Hub~\cite{2017-ShakyaHe.etal}. 
We evaluated all accelerators, except for three with simple, combinational \acp{ht} which are not considered in this work. %
All are non-interfering. 
As can be seen in Tab.~\ref{table:results}, they implement different crypto algorithms and HTs with a variety of payloads and triggers. %
For the payloads, secret data is leaked via output pins (OUT), different implementations of power side channels (\textit{PSC}), through leakage currents (\textit{LC}) or by modulating the output signal on an unused pin creating a radio frequency (\textit{RF}) signal. %
In four cases, the payload consists of denial-of-service attacks (\textit{DoS}) that aim at rapidly draining the battery. %
Other payloads interfere with the encryption via bit flips of the ciphertext output. %
The triggers in the experiments depend either on a predefined plaintext sequence or on implementations of counters that count certain events. %

\begin{table}[t]
    \fontsize{8}{10}\selectfont
    \centering
    \caption{}%
    \begin{tabular}{p{1.9cm}p{0.9cm}p{1.8cm}p{1.9cm}}

      \hline\hline\rule{0pt}{1.1em}%
      Benchmark & Payload & Trigger & Detected by %
      \\
        \hline\rule{0pt}{1.1em}%
        AES-T100 & PSC & plaintext seq. & init\_property \\
        AES-T1000 & PSC & plaintext seq. & init\_property \\
        AES-T1100 & PSC & plaintext seq. & init\_property \\
        AES-T1200 & PSC & \# encryptions & init\_property \\
        AES-T1300 & PSC & plaintext seq. & init\_property \\
        AES-T1400 & PSC & plaintext seq. & init\_property \\
        AES-T1500 & PSC & \# encryptions & init\_property \\
        AES-T1600 & RF & plaintext seq. & init\_property \\
        AES-T1700 & RF & \# encryptions & init\_property \\
        AES-T1800 & DoS & plaintext seq. & init\_property \\
        AES-T1900 & DoS & \# encryptions & coverage check \\
        AES-T2000 & LC & plaintext seq. & init\_property \\
        AES-T2100 & LC & \# encryptions & init\_property \\
        AES-T2500 & bit flip & \# clock cycles & fanout\_property\_21 \\
        AES-T2600 & bit flip & \# values & fanout\_property\_7 \\
        AES-T2700 & bit flip & \# clock cycles & fanout\_property\_21 \\
        AES-T2800 & bit flip & \# values & fanout\_property\_11 \\
        AES-T200 & PSC & plaintext seq. & init\_property \\
        AES-T300 & PSC & plaintext seq. & init\_property \\
        AES-T400 & RF & plaintext seq. & init\_property \\
        AES-T500 & DoS & plaintext seq. & init\_property\\
        AES-T600 & LC & plaintext seq. & init\_property \\
        AES-T700 & PSC & plaintext seq. & init\_property \\
        AES-T800 & PSC & plaintext seq. & init\_property \\
        AES-T900 & PSC & \# encryptions & init\_property \\
        BasicRSA-T200 & DoS & plaintext seq. & init\_property \\
        BasicRSA-T300 & OUT & \# encryptions & init\_property \\
        BasicRSA-T400 & OUT & \# encryptions & init\_property \\
      \hline\hline
    \end{tabular}
    \label{table:results}
    \vspace{-4mm}
\end{table}

We successfully detected all HTs. %
Our method, as discussed in Sec.~\ref{sec:method}, is independent of
the specific characteristics of the HT's implementation. %
The HTs were detected by a failed init or fanout property or through
the coverage check. %
Each AES benchmark also includes an HT-free version. %
We successfully applied our method to the HT-free designs and verified them %
  to be %
secure with respect to sequential HTs. %
They required the proof of the init and fanout properties. %
We did not encounter any spurious counterexamples. %
For the RSA designs no HT-free version was available on Trust-Hub.
  We manually removed the \acp{ht} %
    from the designs %
  and afterwards %
    could verify the absence of any HTs%
  .
During the proof we encountered 2 spurious CEXs that we handled according to the proposed methods in~\refsec{counterexamples}.
The proof runtime of each property was within 1 to~3 seconds and the
memory usage was less than 1\,GB. %
All experiments were conducted on an Intel i7-8700 @
3.2\,GHz %
with
64\,GB %
RAM running Linux and the commercial property checker OneSpin~360~DV
by Siemens EDA. %

In the following, we explain our experiments in more depth using two of the benchmarks as examples. %

\begin{figure}%
    \centering
    \begin{minipage}[h]{.49\linewidth}
    \centering
     \includegraphics[trim={112mm 65mm 112mm 65mm}, clip, width=1\linewidth]{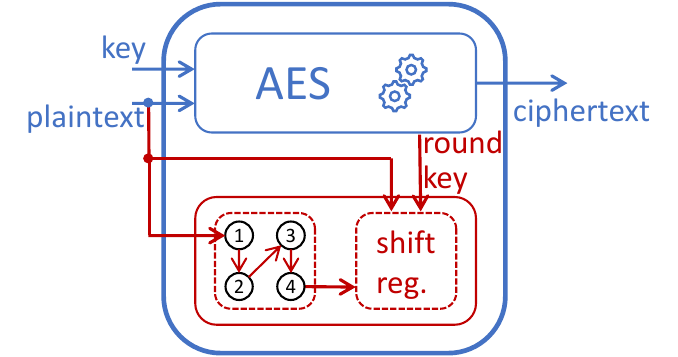} 
     \caption{AES-T1400}
     \label{fig:aes-t1400}
    \end{minipage}%
    \begin{minipage}[h]{.49\linewidth}
        \centering
        \includegraphics[trim={112mm 65mm 112mm 65mm}, clip, width=1\linewidth]{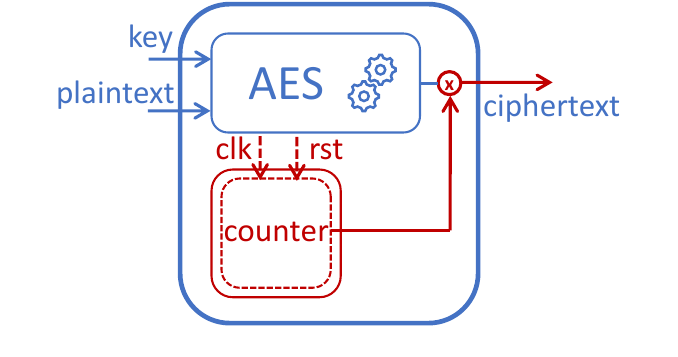} 
        \caption{AES-T2500}
        \label{fig:aes-t2500}
    \end{minipage}
    \vspace{-5mm}
\end{figure}

\textit{Example 1: %
AES-T1400.} 
The trojan in this example features a 4-state FSM as its trigger and is illustrated in Fig.~\ref{fig:aes-t1400}. %
The trojan is triggered when four specific plaintext inputs are observed in a specific order. %
Once activated, the trojan leaks, for each round of encryption, certain bits of the round key via a power side channel. %
The key bits are combined with known input bits and shifted into a register, thereby increasing power consumption. %
We detected the HT with a failed init property. %
The CEX provided by the property checker shows different values in the shift registers of the two instances. %

\textit{Example 2: %
AES-T2500.}
In the second example, the trojan is triggered by the fourth bit of a 4-bit synchronous counter. %
The counter itself does not depend on the IP inputs but starts counting from reset. %
After activation, the trojan flips the least significant bit (LSB) of the ciphertext output. %
Fig.~\ref{fig:aes-t2500} illustrates this behavior. %
The \ac{ht} is detected with \textit{fanout\_property\_21} which proves equal ciphertext outputs.
The property fails and the CEX shows the difference in the LSB of the ciphertext outputs due to a triggered \ac{ht} in only one of the two instances.

Even though it is not in the focus of this work, we conclude our experiments with demonstrating the potential of our method for HW IPs with more complex control behavior.
As an additional case study, we successfully applied the method also to a UART \textit{(RS232-T2400)} from the same benchmark suite. We detected the \ac{ht} by a failed fanout property. %
During the proof we encountered 3 spurious \acp{cex} that we could resolve by property re-verification (cf.~Sec.~\ref{sec:counterexamples}~(1)), and by
disqualifying the behaviors as non-malicious (cf.~Sec.~\ref{sec:counterexamples}~(2)).%
 
\section{Conclusion}
\label{sec:conclusion}

We introduce an exhaustive formal detection methodology for sequential \acp{ht} with arbitrarily long and complex trigger sequences. 
The proposed method is equally effective for any payload ranging from direct data leakage to power side channels.
It is a golden-free approach by merit of a 2-safety verification model.
The method is based on property checking and easy to integrate into pre-silicon verification.
The application of the method to a representative set of accelerators demonstrates the efficiency and effectiveness of our proof method. 
Future work will explore the class of interfering IPs such as (special-purpose) processors.

\ifCLASSOPTIONcaptionsoff
  \newpage
\fi

\bibliographystyle{IEEEtran}
{\footnotesize \bibliography{refs3}}

\end{document}